\DeclareMathOperator{\E}{\mathbb{E}} % Expectation operator
\newtheorem{theorem}{Theorem}
\newtheorem{lemma}{Lemma}
\title{Toward Optimal ANC: Establishing Mutual Information Lower Bound}
\author{%
  François Derrida \\
  School of Electrical and Computer Engineering \\
  Ben-Gurion University of the Negev \\
  \texttt{fr.derrida@gmail.com} \\
  \And
  Shahar Lutati \\
  Blavatnik School of Computer Science \\
  Tel Aviv University \\
  \texttt{shahar761@gmail.com} \\
  \And
  Eliya Nachmani \\
  School of Electrical and Computer Engineering \\
  Ben-Gurion University of the Negev \\
  \texttt{eliyanac@bgu.ac.il} \\
}
\begin{document}

\maketitle

\maketitle

\begin{abstract}
Active Noise Cancellation (ANC) algorithms aim to suppress unwanted acoustic disturbances by generating anti-noise signals that destructively interfere with the original noise in real time. Although recent deep learning–based ANC algorithms have set new performance benchmarks, there remains a shortage of theoretical limits to rigorously assess their improvements.
To address this, we derive a unified lower bound on cancellation performance composed of two components. The first component is information-theoretic: it links residual error power to the fraction of disturbance entropy captured by the anti-noise signal, thereby quantifying limits imposed by information-processing capacity. The second component is support-based: it measures the irreducible error arising in frequency bands that the cancellation path cannot address, reflecting fundamental physical constraints. By taking the maximum of these two terms, our bound establishes a theoretical ceiling on the Normalized Mean Squared Error (NMSE) attainable by any ANC algorithm. We validate its tightness empirically on the NOISEX dataset under varying reverberation times, demonstrating robustness across diverse acoustic conditions.
\end{abstract}

\section{Introduction}

Active Noise Cancellation algorithms \cite{lueg1936process, nelson1991active, fuller1996active, hansen1997active,kuo1999active} aim to improve listening environments by generating anti-noise signals that destructively interfere with unwanted disturbances in real time. Classical adaptive-filtering methods—most notably the Filtered-x Least Mean Squares (FxLMS) algorithm \cite{boucher1991effect} and its variants \cite{das2004active, patra1999identification, tan2001adaptive, kuo2005nonlinear, tobias2005leaky}, have been used to tackle this problem, but they often struggle in non-stationary or highly reverberant settings and offer limited guidance on how close they operate to fundamental performance limits.

In recent years, deep learning–based ANC models such as DeepANC \cite{zhang2021deep}, Attentive Recurrent Networks (ARN) \cite{pandey2022self}, and DeepASC \cite{mishaly2025deepactivespeechcancellation} learn end-to-end mappings from reference and error microphones to cancellation waveforms, outperforming classical signal processing methods. Variants employing recurrent CNNs \cite{park2023had,mostafavi2023deep}, autoencoders \cite{singh2024enhancing}, and fully connected nets \cite{pike2023generalized}, together with selective fixed-filter schemes \cite{shi2020feedforward,park2023integrated}, Kalman-filter hybrids \cite{luo2023gfanc}, and attention modules \cite{zhang2023low}, further enhance robustness and low-latency control. However, no unified theory specifies the minimal achievable error under given acoustic and algorithmic constraints, complicating optimality assessment and targeted architecture design. To fill this gap, we derive a unified lower bound on ANC performance that blends two complementary perspectives:
\begin{itemize}
    \item An information-theoretic term that links the residual error power to the fraction of disturbance entropy captured by the anti-noise signal—quantifying limits imposed by information-processing capacity.
    \item A support-based term that measures the irreducible error arising from frequency bands unsupported by the cancellation path—reflecting fundamental physical constraints.
\end{itemize}

By taking the maximum of these two terms, our bound establishes the first general theoretical ceiling on the NMSE attainable by any ANC algorithm, regardless of model complexity. We stress that this lower bound framework adequate to every possible ANC algorithm, may it be a neural network trained on dataset, a linear filtering, or adaptive mechanism.

We validate our contributions through extensive experiments on the NOISEX benchmark under varying reverberation times. The primary contributions of this work are summarized as follows:
\begin{itemize}
    \item \textbf{Theoretical benchmark:} We derive a unified lower bound on ANC performance that accounts for both information-processing and spectral-support constraints.
    \item \textbf{Empirical validation:} We confirm the tightness of our bound over various noise types and for different reverberation times.
\end{itemize}

\section{Related Work}
The foundational principle of destructive interference was introduced by Lueg, who demonstrated that appropriately phased signals can cancel unwanted oscillations~\citep{lueg1936process}. Early adaptive schemes adopted the Least Mean Squares (LMS) criterion to track variations in amplitude and phase, achieving robust echo cancellation in acoustic feedback paths~\cite{burgess1981active, nelson1991active, fuller1996active, hansen1997active, kuo1999active}. To account for the dynamics of the primary and secondary acoustic paths, the FxLMS algorithm was proposed, with subsequent analysis quantifying the impact of secondary-path inversion errors on convergence and stability~\cite{boucher1991effect}.

Extensions of the LMS framework have focused on mitigating real-world nonlinearities. Functional-Link Artificial Neural Networks (FLANN) were integrated into LMS filters to compensate for distortion in the control path~\cite{patra1999identification}. Bilinear filtering improved modeling of mild non-linearities ~\cite{kuo2005nonlinear}. A tangential-hyperbolic adaptation further captured loudspeaker saturation effects, ensuring stable cancellation under high-amplitude conditions~\cite{ghasemi2016nonlinear}.

Data-driven models have more recently transformed ANC by learning end-to-end mappings from sensor measurements to cancellation signals. Convolutional-LSTM networks estimate both amplitude and phase responses~\cite{zhang2021deep}, and recurrent convolutional architectures exploit temporal dependencies for improved active cancellation prediction~\cite{park2023had, mostafavi2023deep, cha2023dnoisenet}. Fully connected network also applied to ANC \cite{pike2023generalized} as well as Autoencoder-based encodings scheme have been employed to extract latent features and generalize across diverse noise conditions~\cite{singh2024enhancing}. Selective Fixed-filter ANC (SFANC) leverages pre-trained filter banks chosen by lightweight neural controllers, enabling rapid adaptation without full retraining~\cite{shi2020feedforward, shi2022selective, shi2023transferable, luo2022hybrid, park2023integrated}. 

The works of \cite{luo2023deep, luo2023delayless, luo2024unsupervised} have focused on synthesizing optimized filter banks for selective fixed‐filter ANC. Multichannel ANC systems exploit spatial diversity to achieve superior noise attenuation. Deep multichannel controllers learn inter-channel relationships for ANC~\cite{zhu2021new, zhang2023deep, antonanzas2023remote, xiao2023spatially, zhang2023time, shi2023multichannel, shi2024behind}. ~\cite{zhang2023low} proposed an attention-driven framework for real-time ANC by integrating the Attentive Recurrent Network of~\cite{pandey2022self}, and ~\cite{zhang2022attentive} demonstrated real-time ANC using attentive recurrent architectures for online adaptation. In addition, metaheuristic strategies have been explored, ~\cite{zhou2023genetic} applied a genetic algorithm to optimize ANC filters, and ~\cite{ren2022improved} employed a bee colony optimization scheme to refine control coefficients.

\section{Background - Typical ANC setup}

As we can observe in Figure~\ref{fig:typical_ANC}, a typical ANC feedforward setup, the signal captured by the reference microphone is denoted as $x(n)$ and is transmitted through the primary path $P(z)$, and their convolution produces the primary signal denoted $d(n)$ and defined as $d(n) = P(z)*x(n)$. The signal $x(n)$ and the signal captured by the error microphone $e(n)$ are used as inputs of the ANC system to generate the cancelation signal $y(n)$, which is played by the loudspeaker $f_{LS}$. The produced signal $f_{LS}\{y(n)\}$, supposed to eliminate the undesirable noise around the error microphone, is then transmitted through the secondary path $S(z)$ to generate the anti-signal $a(n)$ defined by, $a(n) = S(z)*f_{LS}\{y(n)\}$
Then, the error signal $e(n)$ is obtained by calculating the difference between the primary and secondary path's outputs, $e(n) = d(n) - a(n)$. The ANC controller aims to reduce the error signal $e(n)$ to zero, achieving optimal noise cancellation. In the feedback ANC method, only $e(n)$ is used to generate the canceling signal, focusing on minimizing the residual noise captured by the error microphone.

\section{Fundamental Limits on Cancellation Performance: Information-Theoretic Bounds}

\begin{figure}[ht]
    \centering
    \begin{minipage}{0.5\textwidth}
        \small
        We address the problem of canceling an undesired disturbance signal, denoted as a discrete-time random process \( d(n) \), which arises from filtering a reference source signal \( x(n) \) through a system characterized by the primary path transfer function \( P(z) \). This scenario is canonical in applications such as ANC, acoustic echo cancellation, and adaptive interference mitigation. The objective is to synthesize a cancellation signal \( y(n) \) using an adaptive algorithm or model, potentially involving non-linear transformations \( f(\cdot) \) (e.g., modeling secondary path dynamics or actuator non-linearities), such that the residual error signal \( e(n) = d(n) - f(y(n)) \) is minimized in a suitable sense, typically minimizing its NMSE or power.
        % \lipsum[1] % Placeholder text
    \end{minipage}%
    \hfill
    \begin{minipage}{0.45\textwidth}
        \centering
        \includegraphics[width=\textwidth]{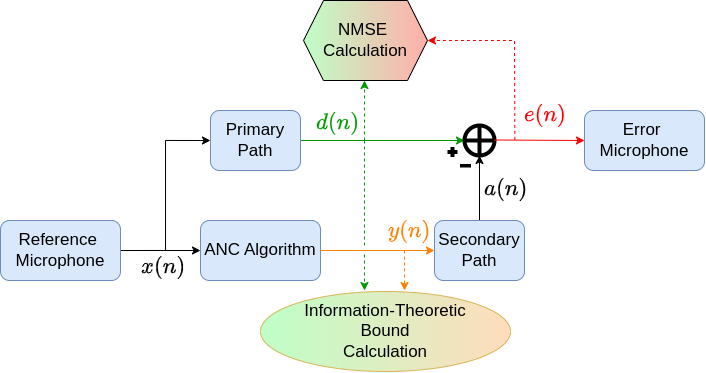}
        \caption{High-level schematic of a feedforward ANC system with lower-bound calculation.} 
        \label{fig:typical_ANC}
    \end{minipage}
\end{figure}

To establish fundamental benchmarks against which practical algorithms can be evaluated, we derive a lower bound on the achievable cancellation performance rooted in information theory. This bound quantifies the irreducible error stemming from the information processing limitations inherent in generating \( y(n) \) based on available observations.

We posit that the reference signal \( x(n) \) is a zero-mean Wide-Sense Stationary (WSS) random process. Consequently, if the primary path \( P(z) \) represents a linear time-invariant (LTI) system, the disturbance signal \( d(n) = P(z)x(n) \) is also a zero-mean WSS process. Furthermore, we assume the synthesized signal \( y(n) \) and the disturbance \( d(n) \) are jointly WSS and ergodic. Let \( S_{dd}(e^{j\omega}) \) denote the power spectral density (PSD) of \( d(n) \). The average power of the disturbance is given by its variance \( \sigma_d^2 = \E[|d(n)|^2] \). A cornerstone connecting the time-domain correlation structure of stationary signals to their frequency-domain power distribution is the Wiener-Khinchin Theorem.

\begin{theorem}[Wiener-Khinchin Theorem for Discrete-Time WSS Processes]
Let \( v(n) \) be a discrete-time WSS random process with Auto-Correlation Function (ACF) defined as \( R_{vv}(\tau) = \E[v(n)v^*(n-\tau)] \). Its Power Spectral Density (PSD), \( S_{vv}(e^{j\omega}) \), is the Discrete-Time Fourier Transform (DTFT) of its ACF:
$$ S_{vv}(e^{j\omega}) = \sum_{\tau=-\infty}^{\infty} R_{vv}(\tau) e^{-j\omega\tau} $$
Conversely, the ACF can be recovered from the PSD via the inverse DTFT:
$$ R_{vv}(\tau) = \frac{1}{2\pi} \int_{-\pi}^{\pi} S_{vv}(e^{j\omega}) e^{j\omega\tau} d\omega $$
The average power of the process is the integral of its PSD over the fundamental frequency interval (or equivalently, the ACF at lag zero):
$$ \E[|v(n)|^2] = R_{vv}(0) = \frac{1}{2\pi} \int_{-\pi}^{\pi} S_{vv}(e^{j\omega}) d\omega $$
\end{theorem}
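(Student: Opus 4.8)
The plan is to read the first displayed identity as the \emph{definition} of the PSD $S_{vv}(e^{j\omega})$ as the DTFT of the correlation sequence $R_{vv}(\tau)$, and then to derive the remaining two identities as consequences. For the series to converge pointwise to a continuous function one needs the correlations to be absolutely summable, $\sum_{\tau}|R_{vv}(\tau)| < \infty$; I would state this as the working hypothesis (it is automatic for the short-memory disturbances relevant here) and remark that in general the transform pair is to be understood in the $L^2(-\pi,\pi)$ / Parseval sense on the circle, or via Cesàro means, so that nothing in the statement is lost.

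Granting the analysis equation, the synthesis (inversion) identity for $R_{vv}(\tau)$ follows from orthogonality of the complex exponentials on $[-\pi,\pi]$. Concretely, I would substitute the series for $S_{vv}(e^{j\omega})$ into $\frac{1}{2\pi}\int_{-\pi}^{\pi} S_{vv}(e^{j\omega}) e^{j\omega\tau}\,d\omega$, interchange sum and integral, and use $\frac{1}{2\pi}\int_{-\pi}^{\pi} e^{j\omega(\tau - \tau')}\,d\omega = \delta[\tau - \tau']$ to collapse the sum to the single surviving term $R_{vv}(\tau)$. The power identity is then just this inversion formula evaluated at $\tau = 0$, combined with $R_{vv}(0) = \E[v(n)v^*(n)] = \E[|v(n)|^2]$, which is immediate from the definition of the ACF.

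The one genuine technical point — the place I expect to spend care — is the interchange of summation and integration in the inversion step: under absolute summability it is justified by Fubini's theorem, since the integrand is dominated by $\sum_{\tau}|R_{vv}(\tau)| < \infty$, while in the general WSS case one instead works with the partial sums weighted by the Fejér kernel, $S_{vv}^{(N)}(e^{j\omega}) = \sum_{|\tau| < N}(1 - |\tau|/N)R_{vv}(\tau)e^{-j\omega\tau}$, and passes to the limit. I would also include the short computation
$$ \frac{1}{N}\,\E\!\left[\,\Bigl|\sum_{n=0}^{N-1} v(n)\, e^{-j\omega n}\Bigr|^2\,\right] = \sum_{\tau = -(N-1)}^{N-1}\Bigl(1 - \tfrac{|\tau|}{N}\Bigr) R_{vv}(\tau)\, e^{-j\omega\tau} \xrightarrow[N\to\infty]{} S_{vv}(e^{j\omega}), $$
which exhibits $S_{vv}$ as a limit of expected periodograms and thereby certifies that it is real and non-negative — the property that justifies the name power \emph{spectral density} and that the later arguments rely on when the disturbance power is split across frequency bands.
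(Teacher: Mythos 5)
The paper states this theorem as classical background and gives no proof of its own, so there is nothing to compare against; your argument stands on its own and is correct. Treating the analysis equation as the definition of the PSD under the hypothesis $\sum_{\tau}|R_{vv}(\tau)|<\infty$, deriving the inversion formula from the orthogonality relation $\frac{1}{2\pi}\int_{-\pi}^{\pi}e^{j\omega(\tau-\tau')}d\omega=\delta[\tau-\tau']$ with Fubini supplying the interchange, and reading off the power identity at $\tau=0$ is the standard and complete route; the expected-periodogram computation correctly certifies non-negativity of $S_{vv}$, which the paper's later frequency-band power decomposition implicitly relies on. The only caveat worth recording is that for a general WSS process without summable correlations the honest statement is Herglotz's theorem (the ACF is positive-definite, hence the transform of a positive spectral \emph{measure}, which need not have a density); your Fej\'er/Ces\`aro remark gestures at this but does not fully deliver it, which is acceptable here since the paper's signals are assumed to have well-defined PSDs throughout.
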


This theorem justifies the use of spectral representations for power calculations under the WSS assumption. For our disturbance signal \( d(n) = P(z)x(n) \), its PSD is related to the PSD of the source \( S_{xx}(e^{j\omega}) \) by \( S_{dd}(e^{j\omega}) = |P(e^{j\omega})|^2 S_{xx}(e^{j\omega}) \), assuming \( P(z) \) is stable. The average power of the disturbance is thus:
\begin{equation}
\sigma_d^2 = \E[|d(n)|^2] = \frac{1}{2\pi} \int_{-\pi}^{\pi} S_{dd}(e^{j\omega}) \, d\omega = \frac{1}{2\pi} \int_{-\pi}^{\pi} |P(e^{j\omega})|^2 S_{xx}(e^{j\omega}) \, d\omega
\label{eq:d_power_psd}
\end{equation}

Now, consider the information-theoretic perspective. Let \( H(d) \) denote the differential entropy rate of the disturbance process \( d(n) \), assuming it exists and is finite. Let \( I(y; d) \) denote the mutual information rate between the synthesized signal process \( y(n) \) and the disturbance process \( d(n) \). Mutual Information measures the information that \(d(n)\) and \(y(n)\) share and therefore quantifies how much knowing one of these variables reduces uncertainty about the other. Intuitively, effective cancellation requires \( y(n) \) (or more precisely, \( f(y(n)) \)), where $f$ is some processing function, to be highly informative about \( d(n) \). Building upon rate-distortion theory concepts, a lower bound on the residual error power \( \sigma_e^2 = \E[|e(n)|^2] \) can be established:

\begin{lemma}[Information-Theoretic Bound on Cancellation Error]
Under the assumption that \( y(n) \) and \( d(n) \) are jointly WSS and ergodic, and that \( d(n) \) possesses a finite differential entropy rate \( H(d) \), the average power of the residual error \( e(n) = d(n) - f(y(n)) \) is lower bounded by:
\begin{equation}
\sigma_e^2 = \E[|e(n)|^2] \geq \sigma_d^2 \left( 1 - \frac{I(y(n); d(n))}{H(d(n))} \right)
\label{eq:info_bound_raw}
\end{equation}
where \( \sigma_d^2 = \E[|d(n)|^2] \) is the average power of the disturbance signal.
\end{lemma}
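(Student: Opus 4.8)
The plan is to reduce the statement to a single information inequality and only then convert it into a statement about power (NMSE). Write $\hat d(n) = f(y(n))$, so that $e(n) = d(n) - \hat d(n)$. Since $\hat d(n)$ is a deterministic (possibly nonlinear) function of $y(n)$, the processes form a Markov chain $d \to y \to \hat d$, and the data processing inequality for information rates gives $I(y(n);d(n)) \ge I(\hat d(n);d(n))$. Hence it suffices to lower bound $\sigma_e^2$ in terms of $I(\hat d;d)$ and $H(d)$; the particular choice of $f$ then drops out, which is exactly what we need for a bound valid for \emph{every} cancellation scheme.

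First I would expand the mutual information rate as $I(\hat d;d) = H(d) - H(d\mid\hat d)$. Using translation invariance of differential entropy (given $\hat d$, the map $d\mapsto d-\hat d$ has unit Jacobian) and the fact that conditioning does not increase entropy, $H(d\mid\hat d) = H(d-\hat d\mid\hat d) = H(e\mid\hat d) \le H(e)$. Chaining these, $I(y(n);d(n)) \ge H(d) - H(e)$, equivalently $H(e) \ge H(d)\bigl(1 - I(y(n);d(n))/H(d)\bigr)$. It then remains to convert this lower bound on the residual \emph{entropy} $H(e)$ into a lower bound on its \emph{power} $\sigma_e^2$, normalized by $\sigma_d^2$. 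For this I would invoke the Gaussian maximum-entropy principle: $H(e) \le \tfrac12\log(2\pi e\,\sigma_e^2)$ and, treating the disturbance as (at worst) Gaussian for the purpose of the bound, $H(d) = \tfrac12\log(2\pi e\,\sigma_d^2)$, and then apply a convexity inequality such as $e^{-t}\ge 1-t$ to pass to the claimed linear form $\sigma_e^2 \ge \sigma_d^2\bigl(1 - I(y(n);d(n))/H(d(n))\bigr)$. An equivalent packaging uses rate--distortion theory directly: $I(y;d)\ge I(\hat d;d)\ge R_d(\sigma_e^2)$ by the data processing inequality and the converse rate--distortion theorem, followed by the Shannon lower bound $R_d(D)\ge H(d) - \tfrac12\log(2\pi e D)$ and monotone inversion in $D$.

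The step I expect to be the main obstacle is precisely this last bridge — passing from the entropy inequality $H(e)\ge H(d) - I(y;d)$ to the stated power inequality. An assumption-free chain only yields the weaker entropy-power form $\sigma_e^2 \ge \tfrac{1}{2\pi e}\exp\!\bigl(2H(d) - 2I(y;d)\bigr)$, and sharpening this into the linear $\sigma_d^2\bigl(1 - I/H\bigr)$ form appears to require either an explicit Gaussian (or near-Gaussian) model for $d$, so that $\tfrac{1}{2\pi e}e^{2H(d)}=\sigma_d^2$, or a normalization/concavity argument that controls the slack between the exponential and linear expressions. Pinning down the exact regularity conditions under which this bridge is rigorous — and stating them explicitly alongside the lemma — is where I expect the real work to lie.
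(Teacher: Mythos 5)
Your derivation follows essentially the same route as the paper: the paper invokes the rate--distortion converse $I(y;d)\ge R(\sigma_e^2)$ together with the Shannon lower bound $R(D)\ge H(d)-\tfrac12\log(2\pi e D)$, which is exactly equivalent to your DPI-plus-conditioning chain $H(e)\ge H(d)-I(y;d)$ followed by the Gaussian maximum-entropy bound on $H(e)$; both arrive at $\sigma_e^2\ge\tfrac{1}{2\pi e}\exp\bigl(2(H(d)-I(y;d))\bigr)$. The obstacle you single out --- passing from this exponential form to the linear $\sigma_d^2\bigl(1-I/H(d)\bigr)$ --- is indeed where the proof lives or dies, and you are right that it does not follow without further input. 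What you should know is that the paper's own resolution of this step is not rigorous either: after assuming $d$ Gaussian so that $\sigma_d^2=\tfrac{1}{2\pi e}e^{2H(d)}$, it replaces $\exp(-2I)$ by $\exp(-I/H(d))$ on the stated grounds that ``the ratio is always smaller than one,'' and then applies the (valid) convexity inequality $e^{-t}\ge 1-t$. But $\tfrac{1}{2\pi e}e^{2H(d)}e^{-2I}\ge \tfrac{1}{2\pi e}e^{2H(d)}e^{-I/H(d)}$ requires $2I\le I/H(d)$, i.e.\ $H(d)\le\tfrac12$, which is nowhere assumed and has no reason to hold (the differential entropy rate is scale-dependent and can take any real value). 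The substitution therefore moves the lower bound in the wrong direction, and the chain from the exponential bound to the claimed linear bound is broken at precisely the point you predicted.

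Two further cautions if you try to repair it. First, the normalization $\alpha=I(y;d)/H(d)\in[0,1]$ rests on $I(y;d)\le H(d)$, which is a discrete-entropy identity; for differential entropy it can fail ($H(d)$ may be negative, and $I$ is unbounded above, e.g.\ $I(d;d)=+\infty$), so the very quantity $1-\alpha$ need not lie in $[0,1]$. Second, even granting Gaussian $d$, the honest conclusion of this line of argument is the entropy-power bound $\sigma_e^2\ge\sigma_d^2\,e^{-2I(y;d)}$, from which $e^{-t}\ge 1-t$ gives $\sigma_e^2\ge\sigma_d^2\bigl(1-2I(y;d)\bigr)$ --- a linear bound, but with slope $2$ rather than $1/H(d)$. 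Your proposal is as complete as the paper's argument; the bridge you declined to build is the one the paper crosses by fiat, and stating the result in the exponential form (or adding the explicit hypothesis $H(d)\le\tfrac12$, or whatever condition one is willing to impose) is what a rigorous version of the lemma would require.
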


\begin{proof}
Let \( d(n) \) and \( y(n) \) be jointly wide-sense stationary (WSS) and ergodic random processes. Suppose \( d(n) \) has finite variance \( \sigma_d^2 \) and finite differential entropy rate \( H(d) \). Define the residual error as:
\[
e(n) = d(n) - f(y(n))
\]
where \( f: \mathbb{R} \to \mathbb{R} \) is any deterministic mapping (possibly nonlinear). Let \( \hat{d}(n) = f(y(n)) \) be the estimate of \( d(n) \) based on \( y(n) \).

From information theory, the mutual information between \( d(n) \) and \( y(n) \) satisfies:
\begin{equation}
I(d(n); y(n)) \geq R(D)
\label{eq:rate_distortion_lower_bound}
\end{equation}
where \( R(D) \) is the rate-distortion function of the source \( d(n) \) at distortion level \( D = \mathbb{E}[(d(n) - \hat{d}(n))^2] = \sigma_e^2 \).

For general stationary ergodic sources with finite differential entropy rate \( H(d) \), the Shannon lower bound on the rate-distortion function provides:
\begin{equation}
R(D) \geq H(d) - \frac{1}{2} \log(2\pi e D)
\label{eq:shannon_lower_bound}
\end{equation}
Solving \eqref{eq:shannon_lower_bound} for \( D \) yields:
\begin{equation}
    D \geq \frac{1}{2\pi e} \cdot \exp\left(2(H(d) - I(d(n); y(n)))\right)
    \label{eq:inequality_exp}
\end{equation}

However, this expression is difficult to interpret directly. Instead, we derive a more interpretable but looser bound by noting that mutual information is always upper bounded by entropy:
\[
I(d(n); y(n)) \leq H(d)
\]
Hence, we can define the normalized mutual information ratio:
\[
\alpha := \frac{I(d(n); y(n))}{H(d)} \in [0, 1]
\]
Recall, that for gaussian noise, the entropy is maximazied and the following holds:
\[
\sigma^2_d  = \frac{1}{2\pi e}e^{2H(d)} \]
Rearranging terms, of \eqref{eq:inequality_exp}, and keep in mind that the ratio is always smaller than one, so for negative exponent it is upper bound,
\[
D \geq \frac{1}{2\pi e} \cdot \exp\left(2H(d)\right) \cdot \exp\left(- \frac{I(d(n); y(n))}{H(d)}\right)
\]
expanding the second exponent as taylor series, and taking the first order yeilds,
\begin{equation}
\sigma_e^2 \geq \sigma_d^2 \left(1 - \frac{I(d(n); y(n))}{H(d)} \right)
\end{equation}
Intuitively, this ratio measures the fraction of the information in \( d(n) \) that is captured by \( y(n) \). If \( \alpha = 1 \), perfect reconstruction is possible in principle, and \( \sigma_e^2 \to 0 \). If \( \alpha = 0 \), then \( y(n) \) carries no information about \( d(n) \), and \( \sigma_e^2 \geq \sigma_d^2 \).

We thus propose a linear lower bound on the distortion based on this ratio:
\begin{equation}
\sigma_e^2 \geq \sigma_d^2 (1 - \alpha) = \sigma_d^2 \left(1 - \frac{I(d(n); y(n))}{H(d)}\right)
\end{equation}
which completes the proof.
\end{proof}

This fundamental bound highlights that the efficacy of any cancellation strategy is limited by how much information about the target disturbance \( d(n) \) is effectively encoded in the synthesized signal \( y(n) \). If \( y(n) \) carries no information about \( d(n) \) (\( I(y; d) = 0 \)), the bound implies \( \sigma_e^2 \geq \sigma_d^2 \), meaning no cancellation is achieved on average. Conversely, if \( y(n) \) allows for perfect reconstruction such that \( f(y(n)) = d(n) \), then \( I(y; d) \) must approach \( H(d) \) (under suitable conditions, e.g., Gaussianity), and the bound approaches zero. However, achieving \( I(y; d) = H(d) \) may be impossible in practice due to several factors not explicitly captured in the bound's derivation but reflected in the properties of \( y(n) \) generated by real-world algorithms. These include:
\begin{itemize}
    \item \textbf{Causality:} Practical algorithms operate causally, potentially requiring a significant observation window (context) to accurately estimate latent variables (e.g., channel coefficients, internal states) needed to generate an effective \( y(n) \). During initialization or adaptation phases, \( I(y; d) \) might be substantially lower than its asymptotic potential.
    \item \textbf{Model Mismatch:} The assumed structure relating \( x(n) \) to \( d(n) \) (given by \( P(z) \)) or the model used to generate \( y(n) \) (including \( f(\cdot) \)) might not perfectly match reality.
    \item \textbf{Computational Constraints:} The complexity of the algorithm generating \( y(n) \) might limit its ability to extract and encode all relevant information about \( d(n) \).
\end{itemize}
Therefore, the bound \eqref{eq:info_bound_raw} represents an idealized limit conditioned on the statistical properties (specifically, the joint distribution encoded in \( I(y; d) \)) of the output \( y(n) \) produced by a given cancellation system. We can express this bound in terms of the NMSE, often reported in decibels (dB), defined as \( \mathrm{NMSE} = \E[|e(n)|^2] / \E[|d(n)|^2] \). Substituting \eqref{eq:d_power_psd} and converting to dB:
\begin{equation}
\mathrm{NMSE}_{\mathrm{dB}} = 10 \log_{10} \left( \frac{\E[|e(n)|^2]}{\E[|d(n)|^2]} \right) \geq 10 \log_{10} \left( 1 - \frac{I(y; d)}{H(d)} \right) + 10\log_{10}(\|P(z)\|_2^2)
\label{eq:info-nmse_elaborated}
\end{equation}
This information-theoretic bound is inherently dependent on the specific algorithm (as it determines the statistics of \(y(n)\) and thus \(I(y;d)\)). It serves as a valuable theoretical ceiling, quantifying performance limitations arising purely from the representational capacity and information-processing capabilities of the cancellation architecture.

\section{System Support Constraints and Model-Independent Limits}

Beyond information-theoretic limitations, cancellation performance can be constrained by fundamental physical or structural properties of the system, irrespective of the algorithm's sophistication. A critical constraint arises from the frequency-domain support of the system components, particularly when the cancellation mechanism \( f(y(n)) \) cannot generate signals covering the entire frequency spectrum of the disturbance \( d(n) \).

Let us specialize to the common case where the cancellation mechanism is an LTI system, \( f(y(n)) = S(z)y(n) \), representing the secondary path in ANC or the echo path model in echo cancellation. Let \( P(e^{j\omega}) \) and \( S(e^{j\omega}) \) be the frequency responses of the primary and secondary paths, respectively. We define the frequency support of a system as the set of frequencies where its response is non-zero. Let
$$ \operatorname{supp}(P) = \{ \omega \in [-\pi, \pi] \mid P(e^{j\omega}) \neq 0 \} $$
$$ \operatorname{supp}(S) = \{ \omega \in [-\pi, \pi] \mid S(e^{j\omega}) \neq 0 \} $$
If there exists a frequency band where the primary path transmits energy (i.e., \( \omega \in \operatorname{supp}(P) \)) but the secondary path has no response (i.e., \( \omega \notin \operatorname{supp}(S) \)), then the component of the disturbance \( d(n) \) at frequency \( \omega \) cannot be cancelled by any signal \( y(n) \) filtered through \( S(z) \). The cancellation signal \( f(y(n)) = S(z)y(n) \) will have zero energy at such frequencies.

Let \( \Omega_{uncancelable} = \operatorname{supp}(P) \setminus \operatorname{supp}(S) \) denote the set of frequencies where the disturbance exists but the cancellation path has no gain. The portion of the disturbance power residing in this uncancelable frequency region constitutes an irreducible component of the error power. The power of \( d(n) \) restricted to this frequency set is given by:
\begin{equation}
\sigma_{d, \text{uncancelable}}^2 = \frac{1}{2\pi} \int_{\Omega_{uncancelable}} S_{dd}(e^{j\omega}) \, d\omega = \frac{1}{2\pi} \int_{\Omega_{uncancelable}} |P(e^{j\omega})|^2 S_{xx}(e^{j\omega}) \, d\omega
\end{equation}
Since the cancellation signal \( f(y(n)) \) has zero energy in \( \Omega_{uncancelable} \), the residual error \( e(n) = d(n) - f(y(n)) \) must contain at least this amount of power. This yields a model-independent lower bound on the error power:

\begin{lemma}[Support-Based Bound on Cancellation Error]
Assuming \( d(n) = P(z)x(n) \) and \( f(y(n)) = S(z)y(n) \), where \( x(n) \) is WSS and \( P(z), S(z) \) are LTI systems, the average power of the residual error \( e(n) \) is lower bounded by the power of the disturbance in the frequency region where the primary path has support but the secondary path does not:
\begin{equation}
\sigma_e^2 = \E[|e(n)|^2] \geq \sigma_{d, \text{uncancelable}}^2 = \frac{1}{2\pi} \int_{\operatorname{supp}(P) \setminus \operatorname{supp}(S)} S_{dd}(e^{j\omega}) \, d\omega
\label{eq:support_bound_raw}
\end{equation}
\end{lemma}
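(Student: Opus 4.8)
The plan is to work entirely in the frequency domain, reducing the claim to the elementary fact that a power spectral density is pointwise nonnegative. First I would record that, under the standing assumptions ($x(n)$ WSS, $P(z)$ and $S(z)$ stable LTI, and $y(n)$ jointly WSS and ergodic with $d(n)$ as posited above), the anti-noise $a(n) = S(z)y(n)$ is WSS and jointly WSS with $d(n)$, hence so is the residual $e(n) = d(n) - a(n)$. By the Wiener--Khinchin Theorem, $\sigma_e^2 = \frac{1}{2\pi}\int_{-\pi}^{\pi} S_{ee}(e^{j\omega})\,d\omega$, so it is enough to understand $S_{ee}$ restricted to $\Omega_{uncancelable} = \operatorname{supp}(P)\setminus\operatorname{supp}(S)$.

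Next I would expand the error PSD using the standard input--output relations for LTI filtering of jointly WSS processes. Starting from $R_{ee}(\tau) = R_{dd}(\tau) - R_{da}(\tau) - R_{ad}(\tau) + R_{aa}(\tau)$ and taking DTFTs, together with $S_{aa}(e^{j\omega}) = |S(e^{j\omega})|^2 S_{yy}(e^{j\omega})$, $S_{ad}(e^{j\omega}) = S(e^{j\omega})\,S_{yd}(e^{j\omega})$, and $S_{da}(e^{j\omega}) = \overline{S(e^{j\omega})}\,\overline{S_{yd}(e^{j\omega})}$, one obtains
\[
S_{ee}(e^{j\omega}) = S_{dd}(e^{j\omega}) + |S(e^{j\omega})|^2 S_{yy}(e^{j\omega}) - 2\operatorname{Re}\big\{ S(e^{j\omega})\, S_{yd}(e^{j\omega})\big\}.
\]
For every $\omega \in \Omega_{uncancelable}$ we have $S(e^{j\omega}) = 0$, so the last two terms vanish and $S_{ee}(e^{j\omega}) = S_{dd}(e^{j\omega})$ identically on this set; informally, the actuator cannot deposit any energy at those frequencies, so the residual inherits the full disturbance spectrum there.

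I would then conclude by restricting the power integral to $\Omega_{uncancelable}$ and invoking nonnegativity of $S_{ee}$ on its complement:
\[
\sigma_e^2 = \frac{1}{2\pi}\int_{-\pi}^{\pi} S_{ee}(e^{j\omega})\,d\omega \;\geq\; \frac{1}{2\pi}\int_{\Omega_{uncancelable}} S_{ee}(e^{j\omega})\,d\omega \;=\; \frac{1}{2\pi}\int_{\Omega_{uncancelable}} S_{dd}(e^{j\omega})\,d\omega,
\]
and substituting $S_{dd}(e^{j\omega}) = |P(e^{j\omega})|^2 S_{xx}(e^{j\omega})$ to match the stated $\sigma_{d,\text{uncancelable}}^2$. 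Measurability of $\Omega_{uncancelable}$ follows from continuity of the frequency responses (the zero sets are closed, their difference Borel), and integrability of $S_{ee}$ follows from the assumed finiteness of $\sigma_e^2$.

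I expect the only genuine subtlety --- the main obstacle --- to be the rigorous handling of the cross-spectral term $S_{yd}$: writing it requires that $y(n)$ and $d(n)$ be jointly WSS so that $R_{yd}(\tau)$ depends on lag alone and admits a DTFT, which is precisely what the joint-WSS/ergodicity hypothesis inherited from the information-theoretic section provides. An alternative that avoids cross-spectra is to note that $a(n) = S(z)y(n)$ has spectral content contained in $\operatorname{supp}(S)$, split $e(n)$ by an ideal bandpass onto $\Omega_{uncancelable}$ and its complement, and argue the projected residual equals the projected disturbance exactly; but making ``spectral content'' and that orthogonal splitting precise for random processes reduces to the same PSD computation, so I would favor the direct Wiener--Khinchin argument.
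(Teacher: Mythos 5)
Your proposal is correct, and it supplies rigor that the paper itself omits: the paper states this lemma with only the one-sentence justification that the cancellation signal ``has zero energy in $\Omega_{uncancelable}$,'' so the residual ``must contain at least this amount of power,'' without ever writing down the error PSD. Your derivation makes that assertion precise in exactly the natural way — expanding $S_{ee}(e^{j\omega}) = S_{dd} + |S|^2 S_{yy} - 2\operatorname{Re}\{S\,S_{yd}\}$, observing that the last two terms vanish wherever $S(e^{j\omega})=0$, and then dropping the nonnegative contribution of $S_{ee}$ outside $\Omega_{uncancelable}$. You are also right to flag that the cross-spectrum step silently requires $y(n)$ and $d(n)$ to be jointly WSS, a hypothesis the lemma's own statement does not list but which the paper assumes globally; for completeness one could add that $|S\,S_{yd}| \le |S|\sqrt{S_{yy}S_{dd}}$ by Cauchy--Schwarz on cross-spectra, so the cross term vanishes on the zero set of $S$ even without worrying about the pointwise finiteness of $S_{yd}$ there. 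In short: same underlying idea as the paper, but yours is an actual proof.
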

This bound holds regardless of the algorithm used to generate \( y(n) \), even if it possessed infinite computational power and perfect knowledge of the system, as it stems from the inherent inability of the system \( S(z) \) to counteract certain frequency components of \( d(n) \). The corresponding NMSE lower bound is obtained by normalizing this irreducible error power by the total disturbance power \( \sigma_d^2 \):
\begin{equation}
\mathrm{NMSE} = \frac{\E[|e(n)|^2]}{\sigma_d^2} \geq \frac{\sigma_{d, \text{uncancelable}}^2}{\sigma_d^2} = \frac{\int_{\operatorname{supp}(P) \setminus \operatorname{supp}(S)} S_{dd}(e^{j\omega}) \, d\omega}{\int_{-\pi}^{\pi} S_{dd}(e^{j\omega}) \, d\omega}
\end{equation}
In decibels:
\begin{equation}
\mathrm{NMSE}_{\mathrm{dB}} \geq 10 \log_{10} \left( \frac{\int_{\operatorname{supp}(P) \setminus \operatorname{supp}(S)} |P(e^{j\omega})|^2 S_{xx}(e^{j\omega}) \, d\omega}{\int_{-\pi}^{\pi} |P(e^{j\omega})|^2 S_{xx}(e^{j\omega}) \, d\omega} \right)
\label{eq:support-nmse_elaborated}
\end{equation}
This expression quantifies the fraction of the total disturbance power that lies in the frequency bands structurally unreachable by the cancellation path \( S(z) \). It represents a fundamental performance ceiling imposed by the physical setup or inherent characteristics of the primary and secondary paths.

\section{Unified Performance Bound}

Combining the insights from the information-theoretic and support-based analyses, we arrive at a unified lower bound on the achievable cancellation performance. The actual residual error power must satisfy both constraints simultaneously. Therefore, the NMSE is lower bounded by the maximum of the two individual bounds:

\begin{theorem}[Unified Lower Bound on NMSE]
Under the assumptions of joint WSS for \( d(n), y(n) \), finite entropy rate \( H(d) \), LTI primary path \( P(z) \), and LTI cancellation path \( S(z) \), the Normalized Mean Squared Error (NMSE) in decibels is lower bounded by:
\begin{equation}
\mathrm{NMSE}_{\mathrm{dB}} \geq \max \left\{ 10 \log_{10} \left( 1 - \frac{I(y; d)}{H(d)} \right), 10 \log_{10} \left( \frac{\int_{\operatorname{supp}(P) \setminus \operatorname{supp}(S)} S_{dd}(e^{j\omega}) \, d\omega}{\int_{-\pi}^{\pi} S_{dd}(e^{j\omega}) \, d\omega} \right) \right\}
\label{eq:unified_bound}
\end{equation}
\end{theorem}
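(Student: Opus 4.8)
The plan is to obtain the unified bound by simply combining the two lemmas already established, since a single quantity $\sigma_e^2 = \E[|e(n)|^2]$ must simultaneously satisfy both lower bounds. First I would verify that the hypotheses of the present theorem subsume those of both Lemma~1 and Lemma~2: joint wide-sense stationarity and ergodicity of $d(n), y(n)$ together with finite differential entropy rate $H(d)$ are exactly what Lemma~1 requires, while the LTI assumptions on $P(z)$ and $S(z)$ (with $f(y(n)) = S(z)y(n)$) are exactly what Lemma~2 requires. Hence both inequalities \eqref{eq:info_bound_raw} and \eqref{eq:support_bound_raw} are available for the same error process $e(n)$.

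Next I would write the two conclusions side by side,
\[
\sigma_e^2 \;\geq\; \sigma_d^2\!\left(1 - \frac{I(y;d)}{H(d)}\right),
\qquad
\sigma_e^2 \;\geq\; \frac{1}{2\pi}\int_{\operatorname{supp}(P)\setminus\operatorname{supp}(S)} S_{dd}(e^{j\omega})\,d\omega ,
\]
and observe that since $\sigma_e^2$ is bounded below by each of the two right-hand sides, it is bounded below by their maximum. Dividing through by $\sigma_d^2 = \frac{1}{2\pi}\int_{-\pi}^{\pi} S_{dd}(e^{j\omega})\,d\omega > 0$ (which we may assume strictly positive, else the cancellation problem is degenerate) converts this into a statement about the $\mathrm{NMSE} = \sigma_e^2/\sigma_d^2$, yielding
\[
\mathrm{NMSE} \;\geq\; \max\!\left\{\, 1 - \frac{I(y;d)}{H(d)},\;\; \frac{\int_{\operatorname{supp}(P)\setminus\operatorname{supp}(S)} S_{dd}(e^{j\omega})\,d\omega}{\int_{-\pi}^{\pi} S_{dd}(e^{j\omega})\,d\omega}\,\right\}.
\]

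Finally I would pass to decibels. Since $t \mapsto 10\log_{10} t$ is strictly increasing on $(0,\infty)$, it commutes with the maximum: $10\log_{10}\max\{a,b\} = \max\{10\log_{10} a,\, 10\log_{10} b\}$. Applying this to the previous display gives exactly \eqref{eq:unified_bound}. The only point requiring a word of care is well-definedness of the logarithm of the first term: by the bound $I(y;d)\le H(d)$ noted in the proof of Lemma~1 we have $1 - I(y;d)/H(d) \in [0,1]$, so the term is nonnegative, and when it equals zero the corresponding dB value is $-\infty$, which is consistent with the max being governed by the (finite) support term. I do not expect a genuine obstacle here — the content of the result lies entirely in Lemmas~1 and~2, and the remaining work is the monotonicity-of-$\log$ bookkeeping together with confirming that the assumptions line up so that both lemmas may be invoked for the same $e(n)$.
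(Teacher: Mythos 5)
Your proposal is correct and follows essentially the same route as the paper, which likewise justifies the theorem by noting that $\sigma_e^2$ must satisfy both lemmas simultaneously and hence their maximum, then normalizing by $\sigma_d^2$ and converting to decibels. Your added remarks on the positivity of $\sigma_d^2$, the monotonicity of $10\log_{10}$, and the degenerate case $I(y;d)=H(d)$ are careful touches the paper leaves implicit, but they do not change the argument.
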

This unified bound encapsulates the dual nature of performance limitations in cancellation systems. Performance can be bottlenecked either by the algorithm's inability to capture and utilize sufficient information about the disturbance (the \( I(y;d) / H(d) \) term) or by the physical inability of the cancellation path to address certain frequency components of the disturbance (the spectral support mismatch term). This provides a comprehensive theoretical tool for evaluating the fundamental limits of cancellation systems, guiding algorithm design and system configuration. Understanding which bound is dominant in a particular scenario can inform whether efforts should focus on improving the information processing capabilities of the algorithm or on modifying the physical system (e.g., actuator placement, speaker characteristics) to improve spectral coverage.

\section{Experiments}
\textbf{Evaluation data:} We report performances on the customary NOISEX dataset \cite{varga1993assessment}, which includes a collection of real-world noise recordings commonly encountered in speech and audio processing tasks, including babble, factory, and engine noises. To align with our preprocessing pipeline, all samples were standardized to 3 seconds and resampled at 16~kHz.
\textbf{Simulator:} A rectangular room of dimensions [3, 4, 2] meters (width, length, height) was simulated. Room impulse responses (RIRs) were generated \citep{allen1979image} via the Python \textit{rir\_generator} package, with a 512-tap length and high-pass filtering enabled. Microphone and speaker positions were set at [1.5, 3, 1]~m (error mic), [1.5, 1, 1]~m (reference mic), and [1.5, 2.5, 1]~m (cancellation speaker). Customary reverberation times were sampled from \{0.15, 0.175, 0.2, 0.225, 0.25\}~s in the boundary calculations. Although the Scaled Error Function (SEF) \cite{tobias2006lms} is commonly used to model loudspeaker nonlinearity, in this work we consider only the linear case, corresponding to $\eta^2 \rightarrow \infty$. \textbf{Hyperparameters:} In order to reproduce the performances presented in DeepASC \cite{mishaly2025deepactivespeechcancellation} we reproduced the training based on their hyperparameters description. 
\textbf{Baseline Methods:} We confronted our bound derived above with the state-of-the-art deep learning-based ANC methods named DeepANC \cite{zhang2021deep}, Attentive Recurent Network (ARN) \cite{zhang2023deep} and, DeepASC \cite{mishaly2025deepactivespeechcancellation}. To evaluate the effectiveness of each method under varying noise conditions, we employed pretrained models as a standardized benchmark. This approach ensured consistent performance assessment across different noise types. \textbf{Ressources:} All experiments were conducted using NVIDIA RTX 3090 GPUs.

\section{Results and Discussions}

\subsection{Boundaries hold NMSE performances over various reverberation times}

Figure~\ref{fig:all_bounds} presents the performances of three baseline methods on three noise types from the NoiseX dataset and the corresponding boundaries derived in the previous sections for various reverberation times sampled from \{0.15, 0.175, 0.2, 0.225, 0.25\}. In the left column, we show the information-theoretic bound separated from the bound based on support constraints from the ANC method. In the right column, we present the unified lower bound for each method defined in Equation~\ref{eq:unified_bound}. In Figure~\ref{fig:all_bounds}, each row corresponds, respectively, to the Babble, Engine, and Factory noises from the NoiseX dataset.

\textbf{Separated Boundaries~~~}Firstly, we notice that the bound based on support constraints (Eq. \ref{eq:support-nmse_elaborated}) is model and dataset independent. Indeed, the left column of Figure~\ref{fig:all_bounds} shares the same support-based boundary (in \textcolor{gray}{gray}). This bound holds for all the reported performances obtained from the three different methods. We also observe that the Information-Theoretic bound (ANC bound - Eq. \ref{eq:info-nmse_elaborated}) is increasing with reverberation times. 
As reverberation time increases, the signal becomes more complex since the physical translation is the creation of more bounces on the simulator's virtual walls.
This leads to longer, more intricate room impulse responses, making both the primary and secondary paths temporally extended and harder to model accurately. This complexity is further amplified by the dense overlap of reflected waves, which blurs the distinction between direct and reflected components, making path estimation more challenging. Consequently, this growing complexity contributes to the steady rise in the ANC information-theoretic boundaries.

\textbf{Audio Noise Cancellation Performances~~~}Figure~\ref{fig:all_bounds} shows the performances of three models (DeepASC, DeepANC and ARN) measured using NMSE over three different noise types. Contrary to what is often observed, the performances are not only calculated and reported for one reverberation time (usually $t_{60}=0.2s$~\cite{zhang2021deep, zhang2023deep, mishaly2025deepactivespeechcancellation}) but for all the reverberation times used in the training process ($t_{60} \in \{0.15, 0.175, 0.2, 0.225, 0.25\}$). In Figure~\ref{fig:all_bounds} each row corresponds to a different type of noise (respectively Babble, Engine, and Factory noises). We observe that each new row displays lower NMSE performances than the previous one. Based on these observations, Babble noise appears to be the easiest to cancel, followed by Engine noise, while Factory noise proves to be the most challenging. Figure~\ref{fig:Spec_datasets} provides visual confirmation of these observations by displaying the spectrograms of the noisy signals. The Engine noise exhibits temporally repetitive frequency structures, which enhance its predictability and facilitate its attenuation using adaptive filtering methods. In contrast, the Factory noise shows a more broadband and statistically uniform distribution, lacking distinct temporal patterns, which increases its resistance to cancellation. Furthermore, Engine contains more high-frequency energy than Babble, which may reduce the effectiveness of noise suppression techniques due to the higher spectral variability and lower correlation with past signal frames.

\begin{figure}[htbp]
    \centering

    % Row 1
    \begin{subfigure}[b]{0.43\textwidth}
        \includegraphics[width=\textwidth]{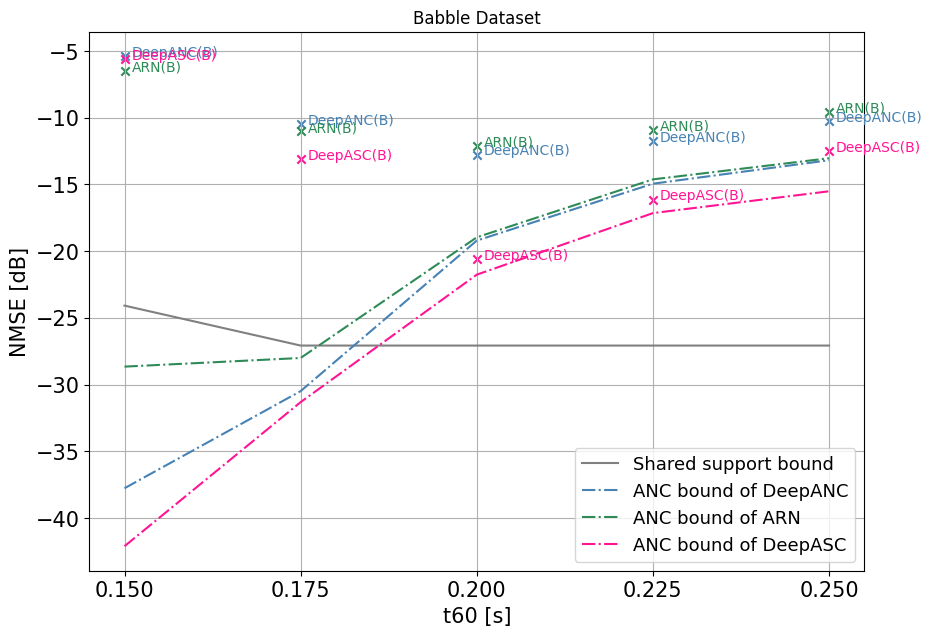}
        \caption{Babble - Separated boundaries}
    \end{subfigure}
    \hfill
    \begin{subfigure}[b]{0.43\textwidth}
        \includegraphics[width=\textwidth]{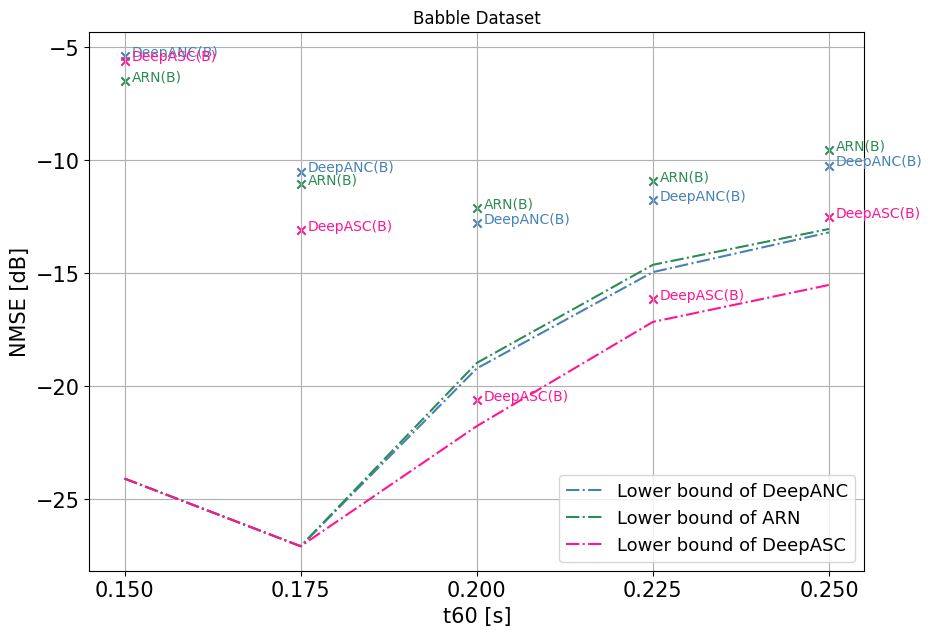}
        \caption{Babble - Unified boundaries}
    \end{subfigure}

    % Row 2
    \begin{subfigure}[b]{0.43\textwidth}
        \includegraphics[width=\textwidth]{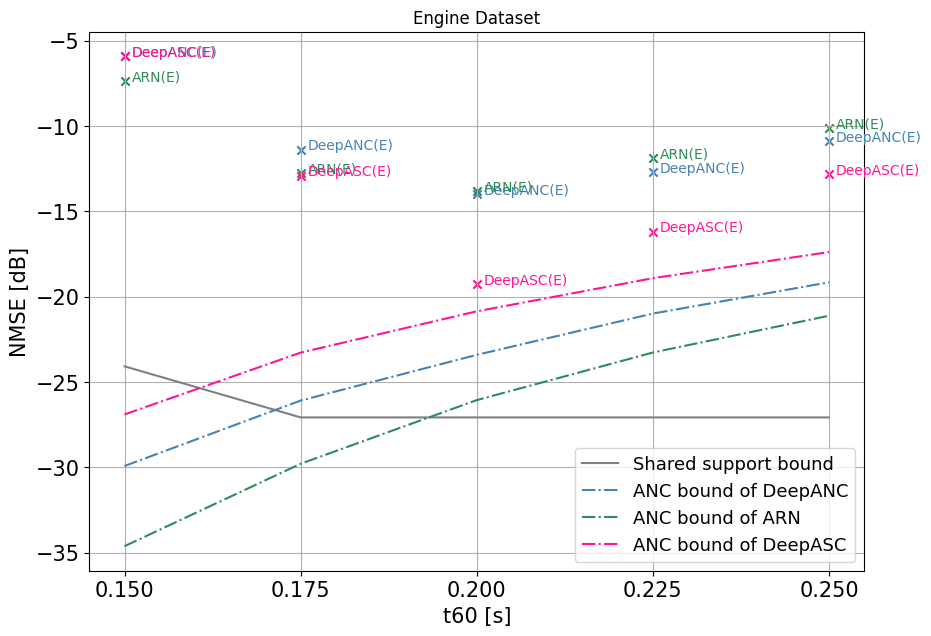}
        \caption{Engine - Separated boundaries }
    \end{subfigure}
    \hfill
    \begin{subfigure}[b]{0.43\textwidth}
        \includegraphics[width=\textwidth]{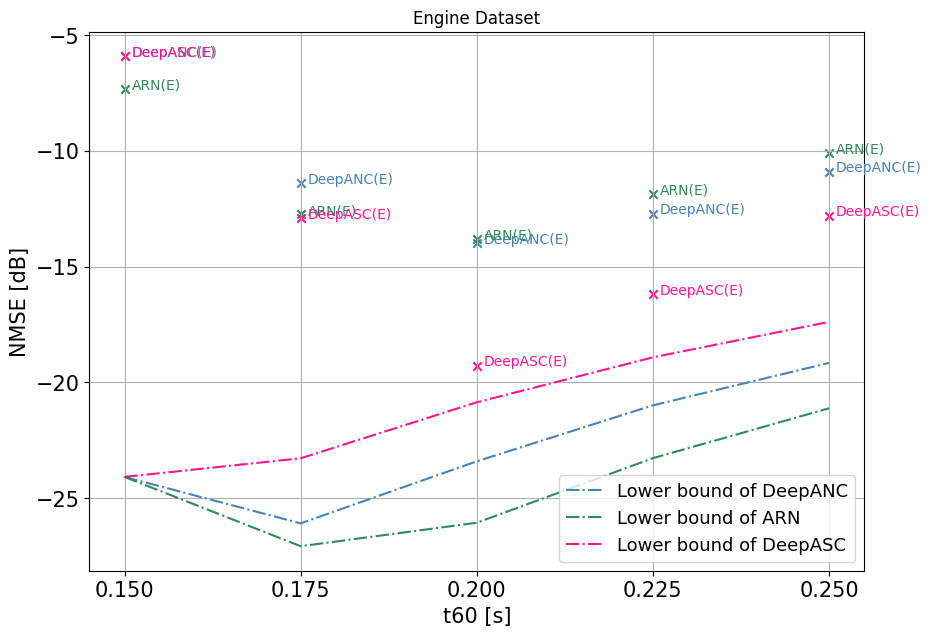}
        \caption{Engine - Unified boundaries}
    \end{subfigure}

    % Row 3
    \begin{subfigure}[b]{0.43\textwidth}
        \includegraphics[width=\textwidth]{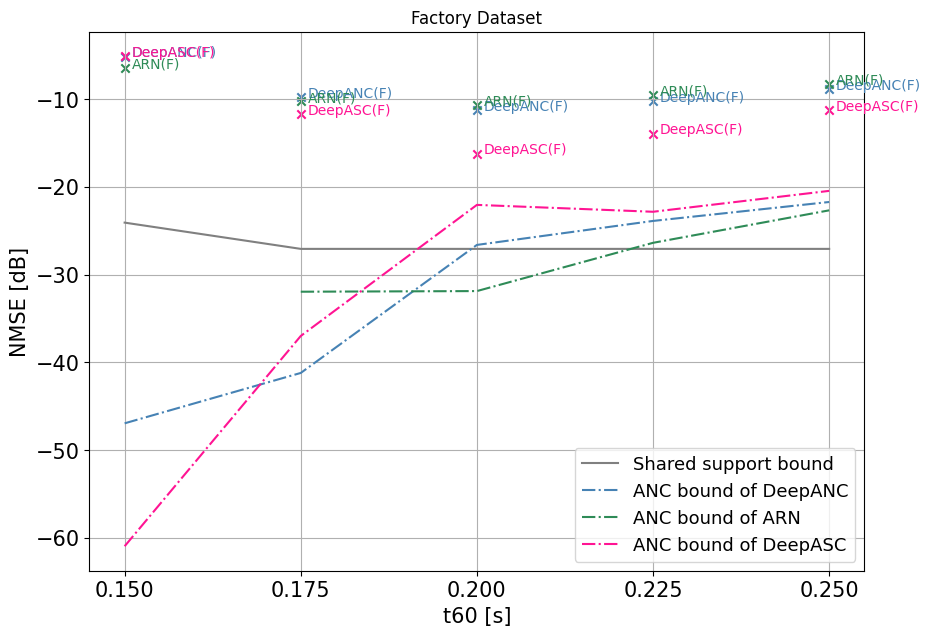}
        \caption{Factory - Separated boundaries}
    \end{subfigure}
    \hfill
    \begin{subfigure}[b]{0.43\textwidth}
        \includegraphics[width=\textwidth]{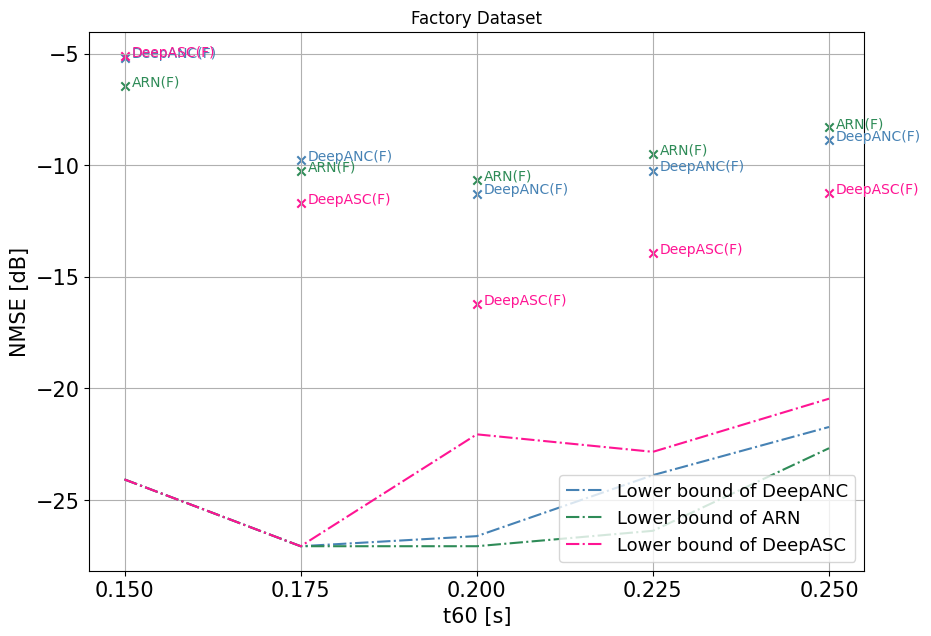}
        \caption{Factory - Unified boundaries}
    \end{subfigure}

    \caption{Model performances for various reverberation times}
    \label{fig:all_bounds}
\end{figure}

\begin{figure}[htb]
  \centering
  \makebox[\textwidth][c]{%
    \begin{subfigure}[b]{0.32\textwidth}
      \centering
      \includegraphics[width=\linewidth]{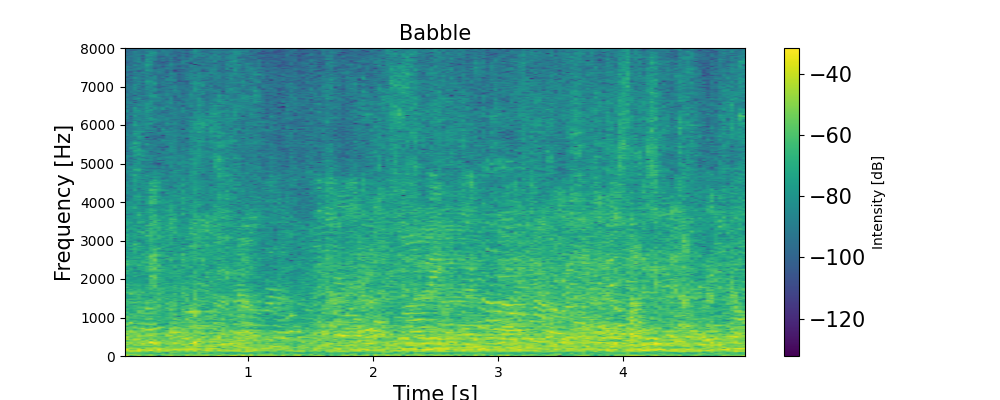}
      \caption{Babble dataset}
    \end{subfigure}
    \hspace{0.02\textwidth}
    \begin{subfigure}[b]{0.32\textwidth}
      \centering
      \includegraphics[width=\linewidth]{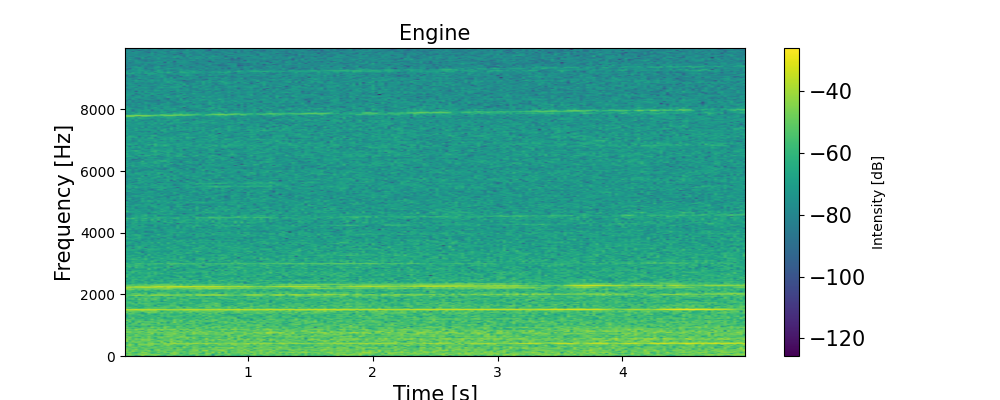}
      \caption{Engine dataset}
    \end{subfigure}
    \hspace{0.02\textwidth}
    \begin{subfigure}[b]{0.32\textwidth}
      \centering
      \includegraphics[width=\linewidth]{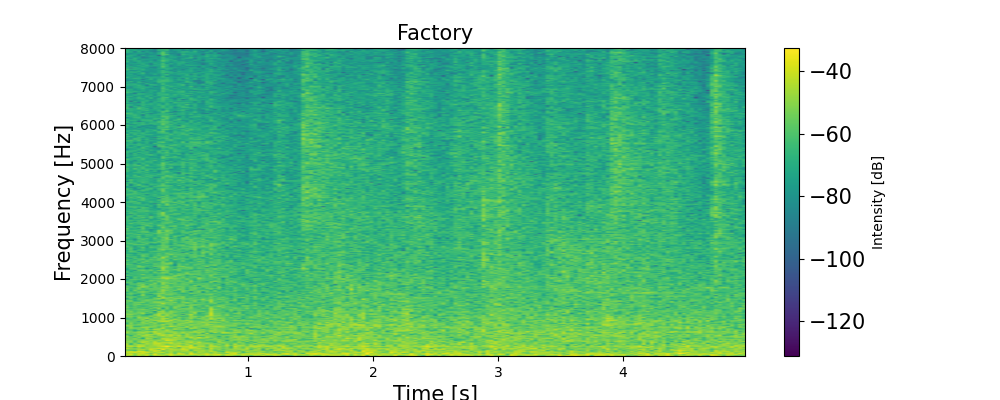}
      \caption{Factory dataset}
    \end{subfigure}
  }
  \caption{Time-Frequency analysis of the different noise}
  \label{fig:Spec_datasets}
\end{figure}

\textbf{Unified Boundaries~~~}Figure~\ref{fig:all_bounds} (right column) demonstrates that the unified bound derived in Equation~\ref{eq:unified_bound} is consistently satisfied across all three noise types and methods. According to previous analysis, we observe that the unified bound necessitates the two terms because the support bound leads for low reverberation times and the Information-Theoretic bound for high reverberation times. Another observation is the increasing of the gap between the performances and the boundaries over the rows (Figure~\ref{fig:all_bounds}). As previously discussed and illustrated in Figure~\ref{fig:Spec_datasets}, these observations indicate an increasing level of cancellation difficulty: Babble, followed by Engine, and finally Factory, which presents the greatest challenge—reflected in a larger gap between actual performance and theoretical boundaries.

\subsection{Boundary overview for a fixed reverberation time ($t_{60}=0.2s$)}

In this section, we provide a consolidated overview of the boundaries across different noise types to follow standard machine learning conventions and offer a comprehensive view of the results. Indeed, Figure~\ref{fig:overview_t60_02} shows the performances and the discussed boundaries but this time for $t_{60} = 0.2s$. As previously mentioned, the unified boundary—defined as the maximum of the Information-Theoretic and Support-based boundaries—encloses the performance achieved by the different methods across all noise types. Figure~\ref{fig:overview_t60_02} highlights the decline in performance and the increasing gap between performances and theoretical boundaries as the noises become more complex (from left to right along the x-axis in Figure~\ref{fig:overview_t60_02}). A key point is that, for a given algorithm, the lower bound depends on the noise characteristics as it changes the equivalent channel capacity.
\begin{figure}[ht]
    \centering
    \begin{minipage}{0.5\textwidth}
        \small
            Therefore the gap between the performance and the boundary changes through noise types. Also, Figure~\ref{fig:overview_t60_02} is a more general observation of the performances that can be used in various fields. Indeed, performance values and deviations from a theoretical boundary can lead to a better understanding of data or algorithms. This work proposes theoretically derived performance boundaries that offer a principled benchmark for analyzing and comparing active noise control algorithms. By unifying information-theoretic and support-based criteria, the framework provides insight into the fundamental limitations of system performance under varying noise conditions, and can inform the design of more effective adaptive filtering strategies in the future.
    \end{minipage}%
    \hfill
    \begin{minipage}{0.45\textwidth}
        \centering
        \includegraphics[width=\textwidth]{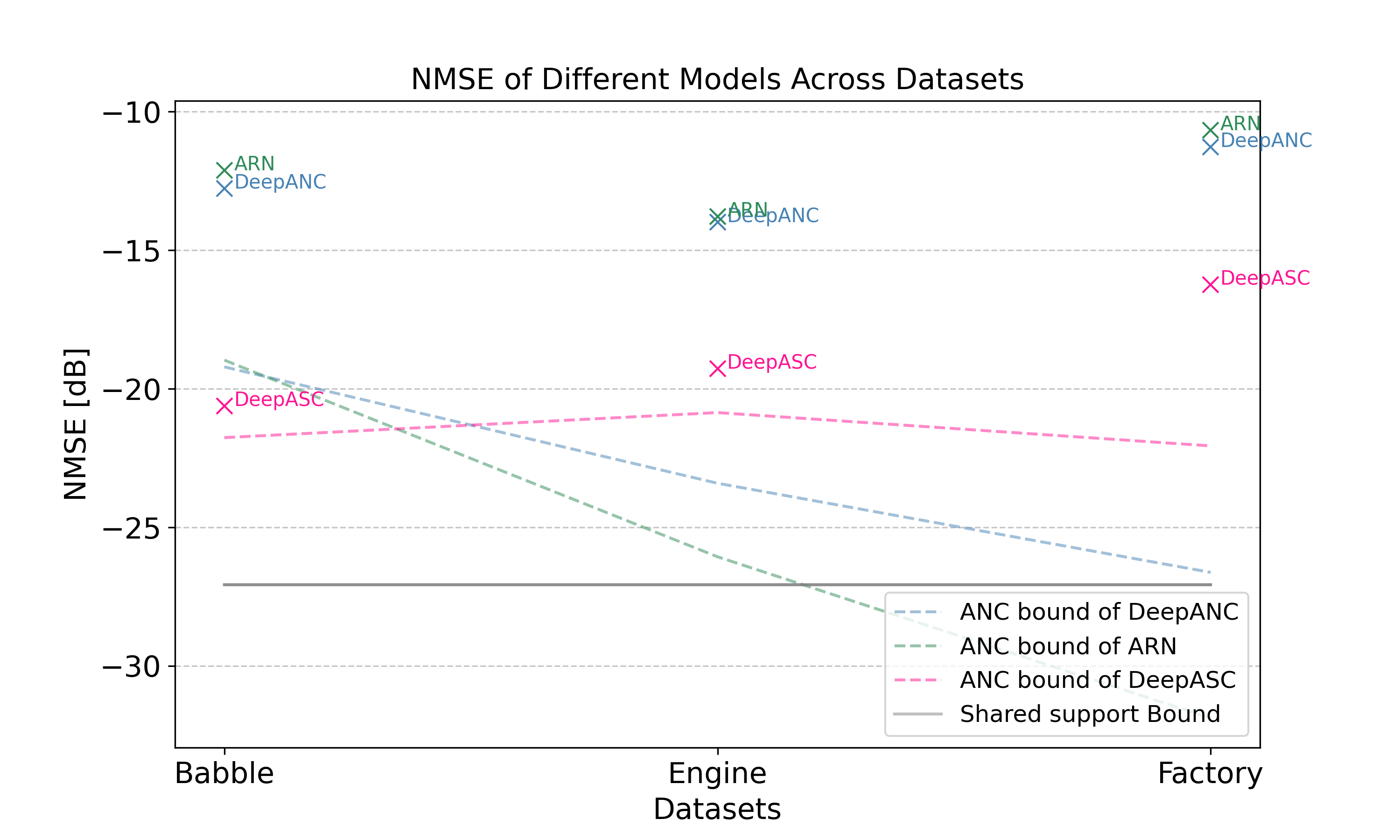}
        \caption{The lower bound and NMSE comparison over different datasets for $t_{60} =0.2s$}
      \label{fig:overview_t60_02}
    \end{minipage}
\end{figure}

\section{Limitations and Broader Impact}
\textbf{Limitations:} While deriving a lower bound for various ANC methods provides valuable insight into their capabilities, such bound estimations come with inherent limitations. In particular, computing the Information-Theoretic bound necessitates a large number of samples to accurately estimate the underlying probability density functions of the signals. Insufficient sampling leads to poor approximation of the true distributions, which in turn distorts the bound calculations, making them unreliable or overly optimistic. Consequently, to ensure meaningful and accurate Information-Theoretic evaluations in ANC scenarios, a substantial volume of diverse audio data is essential. \textbf{Broader Impact:} This work helps identify performance limitations and guides algorithm design. Improved ANC technology can positively impact society by enhancing comfort and health in noisy environments, with applications in consumer electronics, healthcare, and public infrastructure. While generally positive, potential concerns include over-reliance on ANC instead of addressing noise sources directly, and the need for careful integration to ensure fairness and transparency.

\section{Conclusions}

We presented a unified bound that captures the two fundamental limitations in cancellation systems: the algorithm's capacity to extract information about the disturbance (\( I(y;d) / H(d) \)) and the physical system's ability to counter specific frequency components. It serves as a theoretical tool for assessing ANC limits and guiding design decisions—helping determine whether improvements should target algorithmic information processing or physical system setup. The unified bound was evaluated across various noise types using baseline methods and consistently provided a lower bound on their observed performance. Analysis of the bound provides valuable insights into the characteristics of different noise types. Specifically, we observe that the bound increases with the complexity of the noise, indicating greater theoretical difficulty in achieving effective cancellation. Additionally, the widening gap between the model performance and the bound for more challenging noises suggests limitations in the models’ ability to fully exploit the available information, highlighting areas for potential algorithmic improvement.

\clearpage
\bibliographystyle{unsrtnat} 
\bibliography{aac}

\begin{thebibliography}{44}
\providecommand{\natexlab}[1]{#1}
\providecommand{\url}[1]{\texttt{#1}}
\expandafter\ifx\csname urlstyle\endcsname\relax
  \providecommand{\doi}[1]{doi: #1}\else
  \providecommand{\doi}{doi: \begingroup \urlstyle{rm}\Url}\fi

\bibitem[Lueg(1936)]{lueg1936process}
Paul Lueg.
\newblock Process of silencing sound oscillations.
\newblock \emph{US patent 2043416}, 1936.

\bibitem[Nelson and Elliott(1991)]{nelson1991active}
Philip~Arthur Nelson and Stephen~J Elliott.
\newblock \emph{Active control of sound}.
\newblock Academic press, 1991.

\bibitem[Fuller et~al.(1996)Fuller, Elliott, and Nelson]{fuller1996active}
Christopher~C Fuller, Sharon Elliott, and Philip~Arthur Nelson.
\newblock \emph{Active control of vibration}.
\newblock Academic press, 1996.

\bibitem[Hansen et~al.(1997)Hansen, Snyder, Qiu, Brooks, and Moreau]{hansen1997active}
Colin~H Hansen, Scott~D Snyder, Xiaojun Qiu, Laura~A Brooks, and Danielle~J Moreau.
\newblock \emph{Active control of noise and vibration}.
\newblock E \& Fn Spon London, 1997.

\bibitem[Kuo and Morgan(1999)]{kuo1999active}
Sen~M Kuo and Dennis~R Morgan.
\newblock Active noise control: a tutorial review.
\newblock \emph{Proceedings of the IEEE}, 87\penalty0 (6):\penalty0 943--973, 1999.

\bibitem[Boucher et~al.(1991)Boucher, Elliott, and Nelson]{boucher1991effect}
CC~Boucher, SJ~Elliott, and PA~Nelson.
\newblock Effect of errors in the plant model on the performance of algorithms for adaptive feedforward control.
\newblock In \emph{IEE Proceedings F (Radar and Signal Processing)}, volume 138, pages 313--319. IET, 1991.

\bibitem[Das and Panda(2004)]{das2004active}
Debi~Prasad Das and Ganapati Panda.
\newblock Active mitigation of nonlinear noise processes using a novel filtered-s lms algorithm.
\newblock \emph{IEEE Transactions on Speech and Audio Processing}, 12\penalty0 (3):\penalty0 313--322, 2004.

\bibitem[Patra et~al.(1999)Patra, Pal, Chatterji, and Panda]{patra1999identification}
Jagdish~Chandra Patra, Ranendra~N Pal, BN~Chatterji, and Ganapati Panda.
\newblock Identification of nonlinear dynamic systems using functional link artificial neural networks.
\newblock \emph{IEEE transactions on systems, man, and cybernetics, part b (cybernetics)}, 29\penalty0 (2):\penalty0 254--262, 1999.

\bibitem[Tan and Jiang(2001)]{tan2001adaptive}
Li~Tan and Jean Jiang.
\newblock Adaptive volterra filters for active control of nonlinear noise processes.
\newblock \emph{IEEE Transactions on signal processing}, 49\penalty0 (8):\penalty0 1667--1676, 2001.

\bibitem[Kuo and Wu(2005)]{kuo2005nonlinear}
Sen~M Kuo and Hsien-Tsai Wu.
\newblock Nonlinear adaptive bilinear filters for active noise control systems.
\newblock \emph{IEEE Transactions on Circuits and Systems I: Regular Papers}, 52\penalty0 (3):\penalty0 617--624, 2005.

\bibitem[Tobias and Seara(2005)]{tobias2005leaky}
Orlando~Jos{\'e} Tobias and Rui Seara.
\newblock Leaky-fxlms algorithm: Stochastic analysis for gaussian data and secondary path modeling error.
\newblock \emph{IEEE Transactions on speech and audio processing}, 13\penalty0 (6):\penalty0 1217--1230, 2005.

\bibitem[Zhang and Wang(2021)]{zhang2021deep}
Hao Zhang and DeLiang Wang.
\newblock Deep anc: A deep learning approach to active noise control.
\newblock \emph{Neural Networks}, 141:\penalty0 1--10, 2021.

\bibitem[Pandey and Wang(2022)]{pandey2022self}
Ashutosh Pandey and DeLiang Wang.
\newblock Self-attending rnn for speech enhancement to improve cross-corpus generalization.
\newblock \emph{IEEE/ACM Transactions on Audio, Speech, and Language Processing}, 30:\penalty0 1374--1385, 2022.

\bibitem[Mishaly et~al.(2025)Mishaly, Wolf, and Nachmani]{mishaly2025deepactivespeechcancellation}
Yehuda Mishaly, Lior Wolf, and Eliya Nachmani.
\newblock Deep active speech cancellation with multi-band mamba network, 2025.
\newblock URL \url{https://arxiv.org/abs/2502.01185}.

\bibitem[Park et~al.(2023)Park, Choi, Kim, and Chang]{park2023had}
JungPhil Park, Jeong-Hwan Choi, Yungyeo Kim, and Joon-Hyuk Chang.
\newblock Had-anc: A hybrid system comprising an adaptive filter and deep neural networks for active noise control.
\newblock In \emph{INTERSPEECH}, volume 2023, pages 2513--2517. International Speech Communication Association, 2023.

\bibitem[Mostafavi and Cha(2023)]{mostafavi2023deep}
Alireza Mostafavi and Young-Jin Cha.
\newblock Deep learning-based active noise control on construction sites.
\newblock \emph{Automation in Construction}, 151:\penalty0 104885, 2023.

\bibitem[Singh et~al.(2024)Singh, Gupta, Kumar, and Bahl]{singh2024enhancing}
Deepali Singh, Rinki Gupta, Arun Kumar, and Rajendar Bahl.
\newblock Enhancing active noise control through stacked autoencoders: Training strategies, comparative analysis, and evaluation with practical setup.
\newblock \emph{Engineering Applications of Artificial Intelligence}, 135:\penalty0 108811, 2024.

\bibitem[Pike and Cheer(2023)]{pike2023generalized}
Alexander Pike and Jordan Cheer.
\newblock Generalized performance of neural network controllers for feedforward active control of nonlinear systems.
\newblock 2023.

\bibitem[Shi et~al.(2020)Shi, Gan, Lam, and Wen]{shi2020feedforward}
Dongyuan Shi, Woon-Seng Gan, Bhan Lam, and Shulin Wen.
\newblock Feedforward selective fixed-filter active noise control: Algorithm and implementation.
\newblock \emph{IEEE/ACM Transactions on Audio, Speech, and Language Processing}, 28:\penalty0 1479--1492, 2020.

\bibitem[Park and Park(2023)]{park2023integrated}
Seunghyun Park and Daejin Park.
\newblock Integrated 3d active noise cancellation simulation and synthesis platform using tcl.
\newblock In \emph{2023 IEEE 16th International Symposium on Embedded MCSoC}, pages 111--116. IEEE, 2023.

\bibitem[Luo et~al.(2023{\natexlab{a}})Luo, Shi, Shen, Ji, and Gan]{luo2023gfanc}
Zhengding Luo, Dongyuan Shi, Xiaoyi Shen, Junwei Ji, and Woon-Seng Gan.
\newblock Gfanc-kalman: Generative fixed-filter active noise control with cnn-kalman filtering.
\newblock \emph{IEEE Signal Processing Letters}, 2023{\natexlab{a}}.

\bibitem[Zhang et~al.(2023{\natexlab{a}})Zhang, Pandey, and Wang]{zhang2023low}
Hao Zhang, Ashutosh Pandey, and DeLiang Wang.
\newblock Low-latency active noise control using attentive recurrent network.
\newblock \emph{IEEE/ACM transactions on audio, speech, and language processing}, 31:\penalty0 1114--1123, 2023{\natexlab{a}}.

\bibitem[Burgess(1981)]{burgess1981active}
John~C Burgess.
\newblock Active adaptive sound control in a duct: A computer simulation.
\newblock \emph{The Journal of the Acoustical Society of America}, 70\penalty0 (3):\penalty0 715--726, 1981.

\bibitem[Ghasemi et~al.(2016)Ghasemi, Kamil, and Marhaban]{ghasemi2016nonlinear}
Sepehr Ghasemi, Raja Kamil, and Mohammad~Hamiruce Marhaban.
\newblock Nonlinear thf-fxlms algorithm for active noise control with loudspeaker nonlinearity.
\newblock \emph{Asian Journal of Control}, 18\penalty0 (2):\penalty0 502--513, 2016.

\bibitem[Cha et~al.(2023)Cha, Mostafavi, and Benipal]{cha2023dnoisenet}
Young-Jin Cha, Alireza Mostafavi, and Sukhpreet~S Benipal.
\newblock Dnoisenet: Deep learning-based feedback active noise control in various noisy environments.
\newblock \emph{Engineering Applications of Artificial Intelligence}, 121:\penalty0 105971, 2023.

\bibitem[Shi et~al.(2022)Shi, Lam, Ooi, Shen, and Gan]{shi2022selective}
Dongyuan Shi, Bhan Lam, Kenneth Ooi, Xiaoyi Shen, and Woon-Seng Gan.
\newblock Selective fixed-filter active noise control based on convolutional neural network.
\newblock \emph{Signal Processing}, 190:\penalty0 108317, 2022.

\bibitem[Shi et~al.(2023{\natexlab{a}})Shi, Gan, Lam, Luo, and Shen]{shi2023transferable}
Dongyuan Shi, Woon-Seng Gan, Bhan Lam, Zhengding Luo, and Xiaoyi Shen.
\newblock Transferable latent of cnn-based selective fixed-filter active noise control.
\newblock \emph{IEEE/ACM Transactions on Audio, Speech, and Language Processing}, 31:\penalty0 2910--2921, 2023{\natexlab{a}}.

\bibitem[Luo et~al.(2022)Luo, Shi, and Gan]{luo2022hybrid}
Zhengding Luo, Dongyuan Shi, and Woon-Seng Gan.
\newblock A hybrid sfanc-fxnlms algorithm for active noise control based on deep learning.
\newblock \emph{IEEE Signal Processing Letters}, 29:\penalty0 1102--1106, 2022.

\bibitem[Luo et~al.(2023{\natexlab{b}})Luo, Shi, Shen, Ji, and Gan]{luo2023deep}
Zhengding Luo, Dongyuan Shi, Xiaoyi Shen, Junwei Ji, and Woon-Seng Gan.
\newblock Deep generative fixed-filter active noise control.
\newblock In \emph{ICASSP}, pages 1--5. IEEE, 2023{\natexlab{b}}.

\bibitem[Luo et~al.(2023{\natexlab{c}})Luo, Shi, Gan, and Huang]{luo2023delayless}
Zhengding Luo, Dongyuan Shi, Woon-Seng Gan, and Qirui Huang.
\newblock Delayless generative fixed-filter active noise control based on deep learning and bayesian filter.
\newblock \emph{IEEE/ACM Transactions on Audio, Speech, and Language Processing}, 2023{\natexlab{c}}.

\bibitem[Luo et~al.(2024)Luo, Shi, Shen, and Gan]{luo2024unsupervised}
Zhengding Luo, Dongyuan Shi, Xiaoyi Shen, and Woon-Seng Gan.
\newblock Unsupervised learning based end-to-end delayless generative fixed-filter active noise control.
\newblock In \emph{ICASSP}, pages 441--445. IEEE, 2024.

\bibitem[Zhu et~al.(2021)Zhu, Xu, Meng, and Luo]{zhu2021new}
Wenzhao Zhu, Bo~Xu, Zong Meng, and Lei Luo.
\newblock A new dropout leaky control strategy for multi-channel narrowband active noise cancellation in irregular reverberation room.
\newblock In \emph{2021 7th International Conference on Computer and Communications (ICCC)}, pages 1773--1777. IEEE, 2021.

\bibitem[Zhang and Wang(2023)]{zhang2023deep}
Hao Zhang and DeLiang Wang.
\newblock Deep mcanc: A deep learning approach to multi-channel active noise control.
\newblock \emph{Neural Networks}, 158:\penalty0 318--327, 2023.

\bibitem[Anto{\~n}anzas et~al.(2023)Anto{\~n}anzas, Ferrer, De~Diego, and Gonzalez]{antonanzas2023remote}
Christian Anto{\~n}anzas, Miguel Ferrer, Maria De~Diego, and Alberto Gonzalez.
\newblock Remote microphone technique for active noise control over distributed networks.
\newblock \emph{IEEE/ACM Transactions on Audio, Speech, and Language Processing}, 31:\penalty0 1522--1535, 2023.

\bibitem[Xiao et~al.(2023)Xiao, Xu, and Zhao]{xiao2023spatially}
Tong Xiao, Buye Xu, and Chuming Zhao.
\newblock Spatially selective active noise control systems.
\newblock \emph{The Journal of the Acoustical Society of America}, 153\penalty0 (5):\penalty0 2733--2733, 2023.

\bibitem[Zhang et~al.(2023{\natexlab{b}})Zhang, Zhang, Ma, Samarasinghe, and Sun]{zhang2023time}
Huawei Zhang, Jihui Zhang, Fei Ma, Prasanga~N Samarasinghe, and Huiyuan Sun.
\newblock A time-domain multi-channel directional active noise control system.
\newblock In \emph{2023 31st EUSIPCO}, pages 376--380. IEEE, 2023{\natexlab{b}}.

\bibitem[Shi et~al.(2023{\natexlab{b}})Shi, Lam, Shen, and Gan]{shi2023multichannel}
Dongyuan Shi, Bhan Lam, Xiaoyi Shen, and Woon-Seng Gan.
\newblock Multichannel two-gradient direction filtered reference least mean square algorithm for output-constrained multichannel active noise control.
\newblock \emph{Signal Processing}, 207:\penalty0 108938, 2023{\natexlab{b}}.

\bibitem[Shi et~al.(2024)Shi, Gan, Shen, Luo, and Ji]{shi2024behind}
Dongyuan Shi, Woon-seng Gan, Xiaoyi Shen, Zhengding Luo, and Junwei Ji.
\newblock What is behind the meta-learning initialization of adaptive filter?—a naive method for accelerating convergence of adaptive multichannel active noise control.
\newblock \emph{Neural Networks}, 172:\penalty0 106145, 2024.

\bibitem[Zhang et~al.(2022)Zhang, Pandey, and Wang]{zhang2022attentive}
Hao Zhang, Ashutosh Pandey, and DeLiang Wang.
\newblock Attentive recurrent network for low-latency active noise control.
\newblock In \emph{INTERSPEECH}, pages 956--960, 2022.

\bibitem[Zhou et~al.(2023)Zhou, Zhao, and Liu]{zhou2023genetic}
Yang Zhou, Haiquan Zhao, and Dongxu Liu.
\newblock Genetic algorithm-based adaptive active noise control without secondary path identification.
\newblock \emph{IEEE Transactions on Instrumentation and Measurement}, 2023.

\bibitem[Ren and Zhang(2022)]{ren2022improved}
Xing Ren and Hongwei Zhang.
\newblock An improved artificial bee colony algorithm for model-free active noise control: algorithm and implementation.
\newblock \emph{IEEE Transactions on Instrumentation and Measurement}, 71:\penalty0 1--11, 2022.

\bibitem[Varga and Steeneken(1993)]{varga1993assessment}
Andrew Varga and Herman~JM Steeneken.
\newblock Assessment for automatic speech recognition: Ii. noisex-92: A database and an experiment to study the effect of additive noise on speech recognition systems.
\newblock \emph{Speech communication}, 12\penalty0 (3):\penalty0 247--251, 1993.

\bibitem[Allen and Berkley(1979)]{allen1979image}
Jont~B Allen and David~A Berkley.
\newblock Image method for efficiently simulating small-room acoustics.
\newblock \emph{The Journal of the Acoustical Society of America}, 65\penalty0 (4):\penalty0 943--950, 1979.

\bibitem[Tobias and Seara(2006)]{tobias2006lms}
Orlando~Jos{\'e} Tobias and Rui Seara.
\newblock On the lms algorithm with constant and variable leakage factor in a nonlinear environment.
\newblock \emph{IEEE transactions on signal processing}, 54\penalty0 (9):\penalty0 3448--3458, 2006.

\end{thebibliography}

\newpage

%%%%%%%%%%%%%%%%%%%%%%%%%%%%%%%%%%%%%%%%%%%%%%%%%%%%%%%%%%%%

\appendix

\section{Technical Appendices and Supplementary Material}

\subsection{Numerical calculation of the Information-Theoretic Boundary (Equation \ref{eq:info-nmse_elaborated})} \label{sec:MI_appendix_sec}

The numerical estimation of Equation \ref{eq:info-nmse_elaborated} requires the estimation of the marginal and the joint entropies of the signals. 

To do so, we calculated the marginal Probability Density Functions (PDFs) associated to the primary signal $d(n)$ and the output of the ANC algorithm $y(n)$. To estimate the marginal and joint PDFs of two time series variables, we developed a kernel-based approach that aggregates density estimates across paired sequences. For each pair of subsequences in the datasets, we first apply Gaussian Kernel Density Estimation (KDE) to compute the marginal PDFs and joint PDF over fixed evaluation grids. Bandwidth is set inversely proportional to the number of histogram bins (bin\_count) to control the smoothness of the estimates. The joint and marginal PDFs are normalized using numerical integration to ensure valid probability distributions. These estimates are then averaged across all subsequence pairs to yield the final marginal and joint PDFs. 

We compute an information-theoretic bound by estimating the mutual information (MI) between predicted and observed sequences using entropy terms derived from their joint and marginal probability distributions. Entropies are calculated via normalized histograms obtained from KDE-based PDF estimates. The MI is then used to derive a coefficient that scales the energy of the direct path in the room impulse response, yielding an upper-bound estimate and its decibel-scaled version.

\subsection{Numerical calculation of the System Support-Based Boundary (Equation \ref{eq:support-nmse_elaborated})} \label{sec:sup_appendix_sec}

To assess the spectral separation between primary and secondary acoustic paths, we compute a support-based boundary metric in the frequency domain. The method performs a Fast Fourier Transform (FFT) on the impulse responses of both paths and identifies the spectral support of the primary path as the set of frequency bins that exceed a predefined magnitude threshold (e.g. 45 dB). A boundary score is then calculated as the proportion of these primary-support bins that do not overlap with the secondary path's support (defined as bins below the threshold). This ratio quantifies the distinctness of the primary path and is reported both linearly and in decibels.

\subsection{Numerical calculation of the Unified Performance Boundary (Equation \ref{eq:unified_bound})}

This consists in calculating the maximum between the two previously calculated boundaries in Sections~\ref{sec:MI_appendix_sec}~\&~\ref{sec:sup_appendix_sec}.

\end{document}